\documentclass[10pt,journal,a4paper]{IEEEtran}
\usepackage{amsmath}
\usepackage{amsthm}
\usepackage{amsfonts}
\usepackage{amssymb}
\usepackage{algpseudocode}
\usepackage{subfigure}
\usepackage{algorithm}
\usepackage{xspace}
\usepackage{color}
\usepackage{enumerate}
\usepackage{paralist}
\usepackage{flushend}
\usepackage{cite}
\usepackage{graphicx}
\usepackage{epstopdf}

\newtheorem{lemma}{Lemma}
\newtheorem{corollary}{Corollary}
\newtheorem{theorem}{Theorem}

\usepackage{graphicx}
\newcommand{\gt}{GT criterion\xspace}
\renewcommand{\eqref}[1]{Eq. \ref{#1}}
\begin{document}
	\title{Strategies for Utility Maximization in Social Groups with Preferential Exploration}
	\author{
		\IEEEauthorblockN{Saurabh Aggarwal, Joy Kuri}\\
		\IEEEauthorblockA{Department of Electronic Systems Engineering,\\ Indian Institute of Science, Bangalore, India,\\
			Email: saggarwal@dese.iisc.ernet.in, kuri@dese.iisc.ernet.in}
	}
	\maketitle
	\bibliographystyle{IEEEtran}
	\begin{abstract}
We consider a ``Social Group'' of networked nodes, seeking a ``universe'' of segments for maximization of their utility. Each node has a subset of the universe, and access to an expensive link for downloading data. Nodes can also acquire the universe by exchanging copies of segments among themselves, at low cost, using inter-node links. While exchanges over inter-node links ensure minimum or negligible cost, some nodes in the group try to exploit the system. We term such nodes as `non-reciprocating nodes' and prohibit such behavior by proposing the ``Give-and-Take'' criterion, where exchange is allowed iff each participating node has segments unavailable with the other. Following this criterion for inter-node links, each node wants to maximize its utility, which depends on the node's segment set available with the node.

Link activation among nodes requires mutual consent of participating nodes. Each node tries to find a pairing partner by preferentially exploring nodes for link formation and unpaired nodes choose to download a segment using the expensive link with segment aggressive probability. We present various linear complexity decentralized algorithms based on \emph{Stable Roommates Problem} that can be used by nodes (as per their behavioral nature) for choosing the best strategy based on available information. Then, we present decentralized randomized algorithm that performs close to optimal for large number of nodes. We define \emph{Price of Choices} for benchmarking performance for social groups (consisting of non-aggressive nodes only). We evaluate performances of various algorithms and characterize the behavioral regime that will yield best results for node and social group, spending the minimal on expensive link. We consider social group consisting of non-aggressive nodes and benchmark performances of proposed algorithms with the optimal. We find that \emph{Link For Sure} Algorithm performs very close to optimal and subsidizes the need to consider long term strategies for maximization of node's utility.
\end{abstract}
\begin{keywords}
	Device-to-Device Network; Peer-to-Peer Network; Social Groups; Stable Roommates Problem
\end{keywords}
	\section{Introduction and Related work}
\label{sec:intro}
We consider a set of devices using Cellular Network assisted Device-to-Device (CN-D2D) communication \cite{design_network_d2d,d2d_socialgroup} for fulfilling their content needs (such as a high definition media, set of files for software update,  etc). In CN-D2D, devices connect to base station using the Long-Term Evolution (LTE), whereas devices communicate among themselves using WiFi Direct. Content is broken into smaller segments to facilitate content dissemination among devices. Each device downloads the initial subset of segments directly from the base station (using LTE). Devices store the data and disseminate it to other devices in a Device-to-Device (D2D) group \cite{d2d_storage}. Cost of exchanging segments among devices is negligible as compared to the cost of downloading segments from the cellular network. In CN-D2D, the base station facilitates inter-device communication and also acts as content provider on request of devices.

CN-D2D has a network architecture, which is quite similar to being observed in many other networks such as hybrid Peer-to-Peer Content Distribution Network \cite{anatomy,cdnstatus}, Cloud assisted Peer-to-Peer Network \cite{cloud_p2p}, and Direct Connect Networks (popularly known as DC++) \cite{dc++,p2pdc++}. We consider a scenario in which self-interested nodes/devices/peers/users are looking for some common content. Such a network architecture consisting of inter-connected nodes interested in obtaining common content is termed as a \emph{Social Group} \cite{social_group}. Nodes in Social Group are unknown to each other and have lack of trust among themselves. The desire to obtain the same content at minimal cost is the only shared characteristic among the group members. To facilitate exchange among nodes, we assume there is entity called \emph{facilitator} in network, whose role is to gather information about networked nodes. Tracker in Cloud assisted Peer-to-Peer Network\cite{cloud_p2p}, base station in CN-D2D \cite{design_network_d2d}, central hub/server in Direct Connect Networks \cite{dc++,p2pdc++}, content distributer in Peer-to-Peer Content Distribution Network \cite{cdnstatus,anatomy} can act as facilitator.

\begin{figure}[h]
	\centering
	\includegraphics[width=\linewidth]{./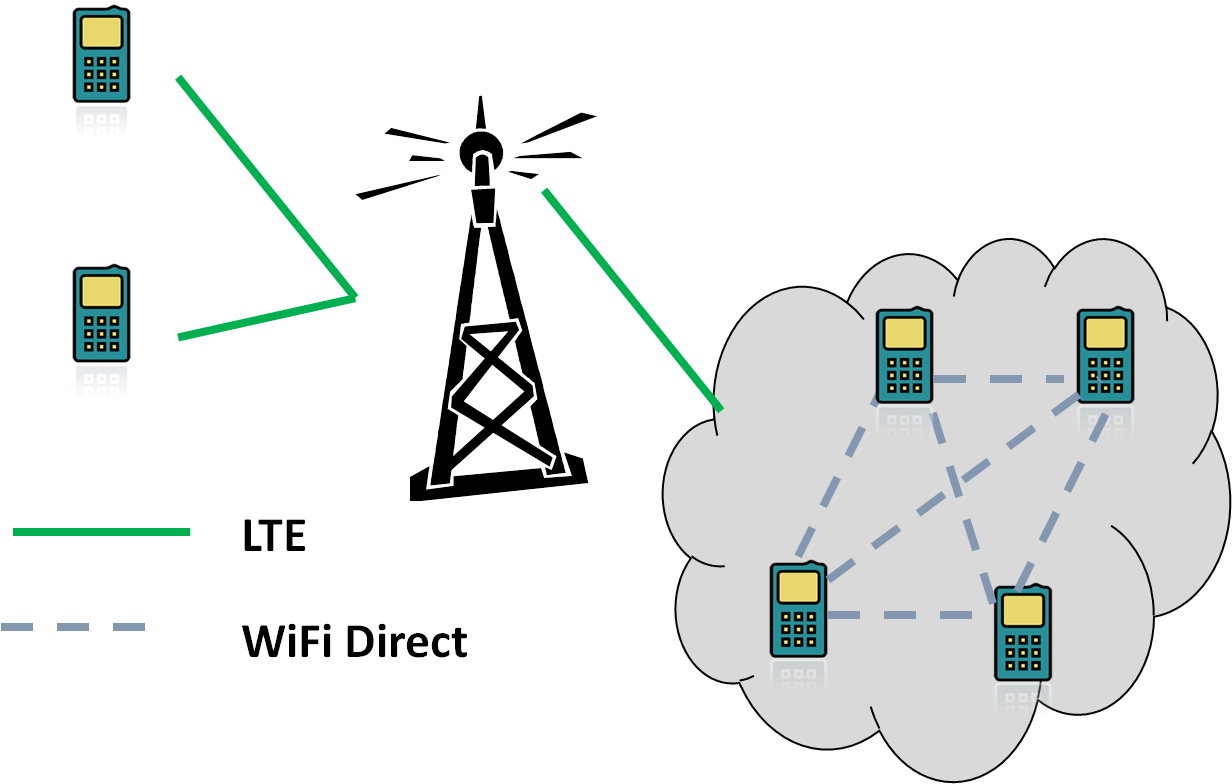}
	\caption{Base Station and group of networked devices using Cellular Network Device-to-Device communication}
	\label{fig:social_group}
\end{figure}
In this paper, any node $A$'s ``selfishness" translates to a refusal to provide copies of segments to another node $B$, unless $A$ receives something in return from $B$. We refer to this as the \emph{Give-and-Take Criterion} \cite{agg_sadhana,social_optimum,vaze_ncc}. After exchange, both $A$ and $B$ will have the \emph{union} of their individual pre-exchange segment sets. Each node wants to acquire the \emph{universe} of segments at low cost via predominantly local exchanges among nodes, subject to the \gt, otherwise from the content provider in social group.

Cooperation among nodes and reduction of download costs in social groups \cite{social_similarity} in ensured by the \gt. Presence of the \gt ensures the absence of \emph{non-reciprocating nodes}, i.e. nodes keen to download segments without uploading any segments in return. Non-reciprocating behavior of nodes matches free-riding behavior observed in various societal systems and communication networks\cite{free_rider,free_economy}. For discouraging such non-reciprocating behavior, various methods and polices have been proposed and implemented in various societal systems and communication networks \cite{freewhite,overfreeride,cheapfreeriding,effort_based,global_fair,free_riding,g2g_free,free_economy,free_public}. Policy counterparts of these can be defined for social groups, but nodes can still exhibit non-reciprocating behavior by doing collusion, identity fraud etc. \cite{sybil,free_riding,sybil_p2p}. For instance, peers practicing whitewashing and sybil attack\cite{sybil_p2p} in Peer-to-Peer network following Bit-torrent's Tit-for-Tat mechanism are able to free ride. However, \gt is immune to such malpractices as it requires node to upload segments to the other node \emph{immediately}. Hence, to participate in GT-compliant local
exchanges, each node needs to have \emph{some} segment(s) with itself --- otherwise, it cannot provide any segments to other nodes as  the \gt fails. For this reason, each node downloads an initial set of segments that allows him to participate in local exchanges in compliance with \gt.

The \gt applies to local exchanges of segment-copies among nodes, and such exchanges are very cheap. Also, each node can download segments from the expensive link (example LTE link in CN-D2D systems); \gt does not apply for such downloads. As noted above, each node downloads an \emph{initial subset} of segments using the expensive link for kick-starting local exchanges. 

Authors in \cite{social_optimum} have studied the problem of centralized scheduling of GT-compliant inter-node exchanges for maximizing the aggregate cardinality. The motivation was to reduce aggregate cost of download over the expensive link. However, the assumption of centralized scheduling need not be true, and we extend the work where nodes act in a decentralized manner on their own for maximization of their own utility functions. 

We consider a decentralized version of the problem being studied in \cite{social_optimum}. Unlike \cite{social_optimum}, nodes can accept or reject a link for activation. For activating a link, mutual consent from the participating nodes is necessary. Nodes will choose the pairing nodes based on the utility that will be derived by them in case of link activation. 

Nodes might choose to ignore nodes which will not return significant utility. To factor such preferential behavior of nodes, we introduce \emph{Preferential Exploration Factor (PEF)} in system model. In some instances, requirement of mutual consent might leave \emph{some}\footnote{Lemma \ref{lem:link_exists1} shows that \emph{at least} one pair of nodes will be give mutual consent for activation of link.} nodes unpaired. These unpaired nodes might choose to download new segment via expensive link based on their aggressiveness for new segments. Segment Aggressiveness for new segments is factored into system by introducing \emph{Segment Aggressive Probability(SAP)} in the system. We propose practically implementable decentralized algorithm for nodes, to determine the best strategy for the node. 

For special cases, where nodes might preferentially explore avenues for link formation, but does not downloads new segments (after the downloading of initial segment set), we propose practically implementable decentralized algorithm for determining the best strategy for node. We define \emph{Price of Choices} to compare the equilibrium scenarios emerging due to decentralized selection of strategies by nodes with the optimal aggregate utility, which can be obtained through centralized scheduling. Also, we propose algorithm which can work in absence of facilitator, but still performing close to optimal.

This paper makes the following contributions:
\begin{itemize}
	\item We use the \gt, to prohibit \emph{non-reciprocating behavior} in social groups. This, we believe, will help in understanding the fundamental principles of data sharing networks consisting of only selfish but contributing nodes.
	\item We propose a practically implementable decentralized algorithms for nodes, with preferential nature of exploration (factored into \emph{Preference Exploration Factor}) and aggression for new nodes (factored into \emph{Segment Aggressive Probability}) namely, \emph{Limited Stable Pairing Algorithm} for deciding upon its strategy in each slot.
	\item For scenarios consisting of nodes which do not download new segments (after downloading new segment set), we use \emph{Price of Choices} to benchmark performance of our proposed decentralized algorithms with the optimal.
	\item We propose \emph{Decentralized Randomized algorithm} which can work in the absence of \emph{facilitator}, but still performing close to optimal.
\end{itemize}

Our paper will be arranged as follows: Section \ref{sec:sysmodel} will describe the system model in detail. Section \ref{sec:analysis} presents a detailed analysis and propose different algorithms depending on various scenarios. We define Price of Choices in Section\ref{sec:poc}. We evaluate and benchmark performances of various algorithms in Section \ref{sec:performance}. At last we will give our conclusions in Section \ref{sec:conclusions} followed by some open ended problems in Section \ref{sec:future}.
	\section{System Model and Problem Formulation}
\label{sec:sysmodel}
We consider a set of nodes, $\mathcal{M}=\left\{1,2,\cdots, m\right\}$ and a universe $\mathcal{N}=\left\{1,2,\cdots, n\right\}$ of segments. Each node $i\in\mathcal{M}$ has an initial collection of segments $O_i\subset\mathcal{N}$. 

\emph{Give-and-Take (GT) criterion:} Two nodes $i, \ j\in \mathcal{M}$ with segment sets $X_i$ and $X_j$, respectively, can exchange segments 
if and only if $X_i\cap X_j^c \neq\emptyset \quad \text{and} \quad X_i^c\cap X_j \neq\emptyset$, i.e. node $j$ has at least one segment which is unavailable with node $i$ and vice versa. 
After exchange, both nodes have the segment set $X_i\cup X_j$. 

We consider the set of nodes $\mathcal{M}$ as the vertices in an undirected graph $G$, where an edge or {\it link} exists between two vertices $i,j\in\mathcal{M}$ if and only if they satisfy \gt. We denote the link between nodes $i$ and $j$ by the unordered 2-tuple $(i,j)$. Both nodes $i$ and $j$ will posses their individual segment sets before exchange. Link $\left(i,j\right)$ can be activated with the consents of node $i$ and $j$ only.

We observe the system at epochs of various discrete decision slots, $r\in\mathbb{N}$.  Dynamic graph at the  beginning of $r^{th}$ decision slot is denoted by $G(r)=\left(\mathcal{M},\mathcal{L}(r)\right)$, where $\mathcal{L}(r)$ denote the set of links satisfying \gt. We define $\mathcal{L}_a(r)\subseteq \mathcal{L}(r)$ to be the set of links being activated in $r^{th}$ slot. Also, $l_i(r)$ denotes the set of the nodes, which satisfy the \gt with node $i$ in $r^{th}$ slot.

Let $O_i(r)$ be the segment set available with node $i$ in the beginning of $r^{th}$ slot. In $r^{th}$ slot, node $i$ will try to pair with one of the nodes with whom \gt is satisfied based on its preferences. If a node is unpaired, then unpaired node will download one new segment uniformly at random from server with a probability based on its aggressiveness for new segments. Strategy set for node $i$ in $r^{th}$ decision slot is given by, 
$$S_i(r)=l_i(r)\bigcup\{0\},$$ 
where $\{0\}$ denotes no pairing strategy of node $i$. If strategy $\{0\}$ is chosen by node $i$, node $i$ will download one new segment uniformly at random from the server with segment with \emph{Segment Aggressive Probability (SAP)}, $a_i(r)\in \left[0,1\right]$ (depending upon node's aggressiveness for new segments). Node $i$ might limit its preferences for exploring possibilities of link formation based on \emph{Preference Exploration Factor (PEF)}, $e_i(r)\in\left[0,1\right]$.

Strategy chosen by node $i$ in $r^{th}$ decision slot is denoted by $s_i(r)\in S_i(r)$. $s_{-i}(r)$ denotes set of strategies adopted by all nodes in $\mathcal{M}\backslash\left\{i\right\}$. Link $\left(i,j\right)$ can be activated in $r^{th}$ slot iff $s_i(r)=j$ and $s_j(r)=i$. $\mathbf{S}(r)=\left[s_i(r)\right]_{m\times 1}$ is a column vector, where $i^{th}$ element denotes the strategy chosen by node $i$ in $r^{th}$ slot. Also, $\mathbf{\hat{S}}(r)=\left[\mathbf{S}(1),\mathbf{S}(2)\cdots \mathbf{S}(r)\right]$ is a collection of all column vectors $\mathbf{S}(r)$. $\mathbf{C}_{exp}(r)=\left[c_i(r)\right]_{m\times 1}$ is a column vector, where $i^{th}$ element denotes the number of segments downloaded by node $i$ till the end of $r^{th}$ slot.

Node $i$'s utility\footnote{For the purpose of analysis, we have assumed $u_i(r)$ to be strictly increasing w.r.t. to cardinality of segment set as outlined in A\ref{asmp:strict_inc}., $u_i(r)$ in the beginning of $r^{th}$ decision slot, is a function of the segment set $O_i(r)$ available with node $i$, i.e. $u_i(r): O_i(r)\rightarrow \mathbb{R}^+$}. Also, $u_i(r+1)=u_i(r)+\Delta u_i(r,s_i(r),s_{-i}(r))$, where $\Delta u_i(r,s_i(r),s_{-i}(r)):(O_i(r),s_i(r),s_{-i}(r))\rightarrow \mathbb{R}^+\bigcup\left\{0\right\}$. $\bar{U}(r)=\left[u_1(r),u_2(r)\cdots u_m(r)\right]$ denotes the vector of utilities in the beginning of $r^{th}$ slot. We also define the utilitarian function, $u\left(r\right)=\sum_{i\in\mathcal{M}} u_i\left(r\right)$ to be used in Section \ref{sec:poc}. 

A series of activation of links in decision slots will eventually lead to a slot $r_{end}$, where \gt is not satisfied for any pair of nodes in $\mathcal{M}$. Also, $\bar{U}(r)=\bar{U}(r_{end})\, \forall\, r\geq r_e$.

Given an initial collection of segment sets $O_i(1)$, node $i$ chooses a series of strategies $\left[s_i(1),s_i(2)\cdots s_i(r_{end}-1)\right]$ so that utility $u_i(r_{end})$ is maximized.
\subsection*{Assumptions}
\label{subsec:assumption}
\begin{enumerate}[{A}1:]
	\item Cost of exchange among nodes is zero or negligible as compared to the cost of downloading the segment from outside the set $\mathcal{M}$ similar to \cite{social_similarity}. Cost of downloading any one segment using the expensive link is $c_{exp}$.
	\item $O_i(1)\neq\emptyset, \text{ and }O_i\neq \bigcup_{j\in\mathcal{M}} O_j=\mathcal{N}, \quad \forall\ i\in \mathcal{M}$.
	\item \label{asmp:strict_inc}Each of the node has the same utility function, $f\left(\cdot\right)$, which is a strictly increasing function of cardinality of segment set available with node, i.e.
	\begin{eqnarray*}
	u_i(r)=f\left(\left|O_i(r)\right|\right)\quad \forall i\in\mathcal{M}\text{ and }r\in\mathbb{N}\\
	\text{and }\forall x,y\in\mathbb{N},\quad x>y\implies f\left(x\right)>f\left(y\right)
	\end{eqnarray*}
	\item \label{asmp:node_knows} In the beginning of each slot, each node has information about the various segment sets available with all other nodes (provided by facilitator), though nodes choose their strategies in a decentralized manner. (We remove this assumption while analyzing \emph{Randomized Strategy Selection} in Section \ref{subsec:rand_sel}.)
\end{enumerate}
\subsection*{Maximizing the Node's Objective}
\label{subsec:objective}
In the problem, each node chooses a series of strategies for maximization its own utility. However, utility gain of choosing a strategy depends on choice of strategies by other nodes as well. Objective function from node's perspective can be written as,
\begin{align}\label{eq:probform}
&\max & u_i(r_{end})=f\left(\left|O_i(r_{end})\right|\right)& \\ \nonumber
&\textrm{subject to: } & s_i(r)\in S_i(r) \quad\forall& \begin{array}{c}
i\in\mathcal{M},\\
r\in\left\{1,2\cdots r_{end}-1\right\}
\end{array}
\end{align}

Each node tries to maximize its utility by means of choosing strategies in each slot so that utility of the node towards the end can be maximized. While, choosing a strategy in each slot, node $i$ should consider its impact on future strategies of the other nodes, which in turn will be affecting the future strategies of node $i$. For the sake of analysis in this paper, we consider nodes to be myopic in deciding their strategies in each slot. i.e., nodes explore various opportunities for link formation and choose the one returning the best immediate utility gain. Immediate utility gain for node $i$ in $r^{th}$ slot can be defined as,
\begin{eqnarray}
\Delta u_i(r,s_i(r),s_{-i}(r)=\begin{cases}
g\left(i,s_i(r),r\right), \text{if} \, s_{s_i(r)}(r)=i\\
0 \quad\quad\text{Otherwise}
\end{cases}\nonumber\\
\text{where, }g\left(i,j,r\right)=f\left(\left|O_j(r)\bigcup O_i(r)\right|\right)-f\left(\left|O_i(r)\right|\right) \nonumber
\end{eqnarray}

Our problem of finding an appropriate pairing nodes in each slot can be modeled as \emph{Stable Roommates Problem with Ties and Incomplete Lists (SRPTI)}\cite{srp2002}. Each node $i\in\mathcal{M}$ has a preference list\footnote{Node $i$ may choose to ignore some nodes in $l_i(r)$ based on value of PEF for node $i$ in $r^{th}$ slot,$e_i(r)$} for set of nodes in $l_i(r)$, such that, node $j_1$ is strictly preferred over node $j_2$ in $r^{th}$ slot, iff $g(i,j_1,r)>g(i,j_2,r)\, \forall\, j_1,j_2\in l_i(r)$. Stable Matching as defined in \cite{srp1985,srp2002} exists iff all nodes are able to find a pairing node. Stable matching need not exist always. But, there might exist some matchings in which only some nodes are able to find stable pairing node. Such matchings consisting of at least 1 stable pair are termed as \emph{Partial Stable Matching (PSM)}.

Imposition of restrictions on utility functions (as per A\ref{asmp:strict_inc}) in context of our problem, ensures existence of PSM in each slot. We obtain PSM in each slot using the concepts devised in Phase 1 of proposed algorithms of \cite{srp1985,srp2002}.

Hence, we solve (\ref{eq:probform}) by repeatedly modeling it as SRPTI. We obtain the utilities obtained in a decentralized manner under the various behavioral factors such as SAP and PEF. Analysis of the utilities towards the end, helps in characterizing the download costs incurred by the individual nodes in the social group over the expensive link for obtaining a desired common content.
	\section{Analysis}
\label{sec:analysis}
\subsection{Deterministic Strategy Selection}
We assume that the nodes are myopic and choose strategies based on immediate utility increment in a slot. Strategy being chosen by node $i$, will yield positive utility iff other node also choses node $i$ as the strategy. A pair of nodes $i$ and $j$ forms a \emph{stable pair}, iff $\not\exists\,k$, which $i$ prefers over $j$, and node $i$ is not preferred by node $k$ over its current partner. Pair of nodes returning the best immediate utility in a slot is a \emph{stable pair}, however, vice versa need not be true.

Best immediate utility to node $i$ in $r^{th}$ slot is given by $\underset{s_i(r),s_{-i}(r)}{\max} \Delta u_i\left(r,s_i\left(r\right),s_{-i}\left(r\right)\right)$. 

For each $r\leq r_e$, consider a directed graph based on first preferences of the nodes $G_f(r)=\left(\mathcal{M},\mathcal{L}_f(r)\right)$, where $\mathcal{L}_f(r)=\left\{i\rightarrow j: j\in\arg\underset{j'\in l_i(r)}{ \max} g(i,j',r)\right\}$. Link between node $i$ and $j$ is bi-directional $i\leftrightarrow j$ iff $i\rightarrow j$ and $j\rightarrow i$ are present in $\mathcal{L}_f(r)$. A path in $G_f(r)$ is a cycle, if there exists a sequence of nodes $\left(i_1,i_2\cdots i_q,i_1\right)$ such that  links $i_p\rightarrow i_{p+1}\in\mathcal{L}_f(r)\, \forall\, 1\leq p\leq q-1$ and $i_q\rightarrow i_1\in\mathcal{L}_f(r)$.
\begin{lemma}
	\label{lem:link_exists1}
	For all $r\leq r_{end}$, each cycle in $G_f(r)$ contains at least one bidirectional link.
\end{lemma}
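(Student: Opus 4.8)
The plan is to use assumption A\ref{asmp:strict_inc} to convert the utility-gain comparisons into comparisons of plain union-cardinalities, and then to run a telescoping argument around the cycle. First I would note that for a fixed node $i$ the term $f(|O_i(r)|)$ in $g(i,j,r)=f(|O_i(r)\cup O_j(r)|)-f(|O_i(r)|)$ is constant, so by the strict monotonicity of $f$ a node $j$ maximizes $g(i,j,r)$ over $l_i(r)$ if and only if it maximizes the union-cardinality $|O_i(r)\cup O_j(r)|$. Hence an arc $i\to j$ lies in $\mathcal{L}_f(r)$ precisely when $j$ maximizes $|O_i(r)\cup O_j(r)|$ among all GT-neighbours of $i$.

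Next I would fix a cycle $(i_1,i_2,\dots,i_q,i_1)$ in $G_f(r)$ and read the indices cyclically. The crucial structural fact is that the GT criterion is symmetric, so the link graph is undirected and $l_i(r)$ is a symmetric relation; therefore the presence of the arc $i_{p-1}\to i_p$ forces $i_{p-1}\in l_{i_p}(r)$, i.e. the predecessor of $i_p$ is a legitimate candidate in $i_p$'s own maximization. Writing $w_p=|O_{i_p}(r)\cup O_{i_{p+1}}(r)|$ for the union-cardinality carried by the arc $i_p\to i_{p+1}$, the optimality of $i_{p+1}$ for $i_p$, applied to the admissible candidate $i_{p-1}$, yields $w_p\ge w_{p-1}$ for every $p$.

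The resulting chain $w_1\ge w_q,\ w_2\ge w_1,\dots,w_q\ge w_{q-1}$ closes up around the cycle and forces $w_1=w_2=\cdots=w_q$. Consequently $|O_{i_p}(r)\cup O_{i_{p-1}}(r)|=|O_{i_p}(r)\cup O_{i_{p+1}}(r)|$ equals the maximum achievable by $i_p$, so by strict monotonicity of $f$ the predecessor $i_{p-1}$ also lies in $\arg\max_{j'\in l_{i_p}(r)} g(i_p,j',r)$; hence the reverse arc $i_p\to i_{p-1}$ is present too, and in fact \emph{every} link of the cycle is bidirectional. This proves more than the statement requires, which asks only for one such link, and it absorbs the trivial length-two case ($i_1\leftrightarrow i_2$) automatically.

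I expect the main obstacle to be setting up the telescoping correctly: one must select the right per-arc quantity, namely the union-cardinality rather than the raw gain $g$, and one must justify that each node's predecessor in the cycle is an admissible comparison candidate, which is exactly the point where symmetry of the GT criterion is invoked. Once these two observations are in place, the inequality chain and the equality-case argument are routine.
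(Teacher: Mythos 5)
Your proof is correct and follows essentially the same route as the paper's: both compare the union-cardinalities attached to the arcs of the cycle, use the symmetry of the GT criterion so that each node's predecessor is an admissible candidate in its maximization, and telescope the resulting inequalities around the cycle. The only difference is packaging --- the paper assumes no bidirectional link exists and derives an inconsistent chain of strict inequalities, whereas your direct version forces all the quantities $w_p$ to be equal and thereby yields the slightly stronger conclusion that \emph{every} link of the cycle is bidirectional.
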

\begin{proof}
	There are no self loops by definition of \gt. Hence, cycle will have more than 1 node. Also, by definition of bidirectional link, cycle with $2$ nodes is a bidirectional link.
	
	Let us consider a cycle with $q$ nodes $\left(2<q\leq m\right)$ namely $\left(i_1,i_2,\cdots i_q,i_1\right)$. If possible, let there be no bidirectional link in the cycle.
	
	Since, node $i_2$ returns the best utility to node $i_1$, therefore, 
	\begin{eqnarray}
		\label{eq:1st_inequality1}
		g(i_1,i_2,r)&\geq& g(i_1,i_q,r)\nonumber\\
		\Rightarrow f\left(\left|O_{i_2}(r)\bigcup O_{i_1}(r)\right|\right)&\geq& f\left(\left|O_{i_q}(r)\bigcup O_{i_1}(r)\right|\right)
	\end{eqnarray}
	Also, existence of equality in \eqref{eq:1st_inequality1} $\Rightarrow i_1\rightarrow i_q \in\mathcal{L}_f(r)$, therefore, link between node $i_1$ and $i_q$ is bidirectional. Rewriting \eqref{eq:1st_inequality1},
	\begin{eqnarray}
		\label{eq:1st_inequal}
		f\left(\left|O_{i_2}(r)\bigcup O_{i_1}(r)\right|\right) > f\left(\left|O_{i_q}(r)\bigcup O_{i_1}(r)\right|\right)
	\end{eqnarray}
	
	Similarly, node $i_3$ returns the best utility to node $i_2$, therefore,
	\begin{eqnarray}
		\label{eq:2nd_inequal}
		f\left(\left|O_{i_3}(r)\bigcup O_{i_2}(r)\right|\right) > f\left(\left|O_{i_1}(r)\bigcup O_{i_2}(r)\right|\right)
	\end{eqnarray}
	
	Continuing in this manner we will get a set of inequalities like \eqref{eq:1st_inequal} and \eqref{eq:2nd_inequal}
	%
	%
	

	\begin{eqnarray*}
		f\left(\left|O_{i_2}(r)\bigcup O_{i_1}(r)\right|\right)&>& f\left(\left|O_{i_q}(r)\bigcup O_{i_1}(r)\right|\right)\\
		f\left(\left|O_{i_3}(r)\bigcup O_{i_2}(r)\right|\right)&>& f\left(\left|O_{i_1}(r)\bigcup O_{i_2}(r)\right|\right)\\
		f\left(\left|O_{i_4}(r)\bigcup O_{i_3}(r)\right|\right)&>& f\left(\left|O_{i_2}(r)\bigcup O_{i_3}(r)\right|\right)\\
		&\vdots&\\
		f\left(\left|O_{i_q}(r)\bigcup O_{i_{q-1}}(r)\right|\right)&>& f\left(\left|O_{i_{q-2}}(r)\bigcup O_{i_{q-1}}(r)\right|\right)\\
		f\left(\left|O_{i_1}(r)\bigcup O_{i_q}(r)\right|\right)&>& f\left(\left|O_{i_{q-1}}(r)\bigcup O_{i_q}(r)\right|\right)\\
	\end{eqnarray*}
	Since the above set of inequalities is inconsistent, there is a contradiction. Therefore, there exists at least one bidirectional link in the cycle.
\end{proof}
\begin{corollary}[Existence of Partial Stable Matching]
	For all $r\leq r_{end}$, PSM exists in $r^{th}$ slot.
\end{corollary}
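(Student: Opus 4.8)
The plan is to reduce the existence of a Partial Stable Matching to the existence of a single stable pair, and then to extract such a pair from the bidirectional link that Lemma~\ref{lem:link_exists1} guarantees inside any cycle. First I would observe that a bidirectional link $i\leftrightarrow j$ in $G_f(r)$ is itself a stable pair. Indeed, if both $i\rightarrow j$ and $j\rightarrow i$ lie in $\mathcal{L}_f(r)$, then $j$ is a maximizer of $g(i,\cdot,r)$ over $l_i(r)$ and $i$ is a maximizer of $g(j,\cdot,r)$ over $l_j(r)$. Consequently there is no node $k$ that $i$ strictly prefers to $j$ (and symmetrically for $j$), so the first conjunct of the blocking condition in the definition of a stable pair can never hold: $\{i,j\}$ cannot be blocked and is therefore stable. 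Any matching that includes $\{i,j\}$ thus contains at least one stable pair and is, by definition, a PSM.

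It then remains to show that, in every slot $r$ in which at least one pair of nodes satisfies the \gt, i.e.\ $\mathcal{L}(r)\neq\emptyset$, the graph $G_f(r)$ contains a bidirectional link, and the natural route is to first exhibit a directed cycle so that Lemma~\ref{lem:link_exists1} can be applied. Restricting attention to $V(r)=\{i: l_i(r)\neq\emptyset\}$, I note that $V(r)$ is nonempty (any GT-compatible pair places both endpoints into $V(r)$), and that every node of $V(r)$ has out-degree at least one in $G_f(r)$, since $\arg\max_{j'\in l_i(r)} g(i,j',r)$ is a nonempty subset of $l_i(r)$. Moreover every out-edge $i\rightarrow j$ remains inside $V(r)$, because the \gt\ is symmetric, so $j\in l_i(r)$ forces $i\in l_j(r)$ and hence $l_j(r)\neq\emptyset$. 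A finite directed graph in which every vertex has out-degree at least one must contain a directed cycle: starting from any vertex of $V(r)$ and repeatedly following an out-edge, the walk stays in the finite set $V(r)$ and must eventually revisit a vertex, closing a cycle.

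The last step simply invokes Lemma~\ref{lem:link_exists1} on this cycle, which yields a bidirectional link; by the first step this link is a stable pair, so a PSM exists. The main subtlety I expect is the boundary case $r=r_{end}$, where no pair satisfies the \gt, $G_f(r)$ has no edges, and $V(r)=\emptyset$; there the statement is degenerate and should be read under the convention that a PSM is asserted only while GT-compatible pairs remain, i.e.\ effectively for $r<r_{end}$. The only other point that needs care is the faithful translation of the ``bidirectional link $\Rightarrow$ no node preferred over the partner'' reasoning into the paper's precise definition of a stable pair, so that the top-preference property is correctly interpreted as the absence of a blocking pair.
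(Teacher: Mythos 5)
Your proof is correct and takes essentially the same route as the paper: extract a bidirectional link of $G_f(r)$ via Lemma~\ref{lem:link_exists1}, observe that such a link is a stable pair, and conclude that a PSM exists. If anything, your write-up is more complete than the paper's one-line argument, which invokes Lemma~\ref{lem:link_exists1} without first establishing that $G_f(r)$ actually contains a cycle; your out-degree argument (every node with nonempty $l_i(r)$ has an out-edge that stays among such nodes, so a walk in this finite vertex set must eventually close a cycle) supplies precisely that missing step, and your handling of the degenerate slot $r=r_{end}$ is likewise a detail the paper passes over.
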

\begin{proof}
	Lemma 1 proves existence of a pair of nodes $i$ and $j$ such that activating link $\left(i,j\right)$ returns the best immediate utility to nodes $i$ and $j$ in $r^{th}$ slot. Such a pair of nodes is a \emph{Stable Pair}. Hence, Partial Stable Matching exists.
\end{proof}

\emph{Partial Stable Matching (PSM)} can also be defined as a collection of stable pairs such that no two stable pairs share a common node. PSM such that no two nodes among the unpaired nodes satisfy the \gt is defined as \emph{Maximal Partial Stable Matching (MPSM)}. Also, we define \emph{Limited Preference Stable Matching (LPSM)}\footnote{MPSM can be treated as a special case of LPSM, in which $e_i(r)=1 \forall i\in\mathcal{M}$} is a stable matching, such that each of the paired nodes is paired with node in its preference list limited by individual PEF's of the paired nodes. 

Node $i$ will limit its preference to top most $\max\left(\lfloor e_i(r)\left|l_i(r)\right|\rfloor,1\right)$ nodes in its preference list, where $e_i(r)$ is Preference Exploration Factor in $r^{th}$ slot and $\left|l_i(r)\right|$ denotes the degree of node $i$ in $G(r)$. If values of PEF is very low, $e_i(r)\leq\dfrac{1}{m-1}$, then effectively $G(r)$ will be $G_f(r)$ as nodes will only have top preference in their preference list.

For each unpaired node $i$ in $r^{th}$ slot, $s_i(r)=\{0\}$ and node $i$ will download a segment from $\left(O_i(r)\right)^c$ uniformly at random with Segment Aggressive Probability, $a_i(r)$. 

Each of the node will be follow the \emph{Limited Stable Pairing Algorithm} as outlined in Algorithm \ref{alg:lspa}. If each of the nodes follows Algorithm \ref{alg:lspa}, then LPSM will be obtained in each slot. Algorithm \ref{alg:lspa} is based on algorithms being proposed for various variants of Stable Roommates Problem in \cite{srp1985,srp2002}. These algorithms are extended to incorporate PEF and SAP.

\begin{algorithm}[h]
	\caption{Limited Stable Pairing Algorithm for node $i$}
	\label{alg:lspa}
	\begin{algorithmic}[1]
		\Require Knowledge of $O_j(r)\forall j\in\mathcal{M}$ to node $i$ and $e_i(r), a_i(r) \forall r>0$
		\For {all slots $r>0$}
		\State Arrange all nodes in order of their decreasing incremental utilities.
		\State Choose top $\max(1,\lfloor e_i(r)\left|l_i(r)\right|\rfloor)$ nodes from the list
		\While {(Preference list is non-empty) or (node $i$ does not have a partner)}
		\State Propose a link to node $j$ at head of preference list
		\If {node $j$ accepts $i$}
		\State Activate link between node $i$ and $j$.
		\EndIf
		\EndWhile
		\If {node $i$ is unable to find a pairing node}
		\State Randomly choose $X$ with probability distribution $\mathbb{P}(X=1)=a_i(r)=1-\mathbb{P}(X=0)$
		\If {X=1}
		\State Choose $e\in O_i^c(r)$ uniformly at random.
		\State $O_i(r+1)\gets O_i(r)\bigcup\left\{e\right\}$
		\State $c_i(r)\gets c_i(r-1)+c_{exp}(e)$
		\EndIf
		\EndIf
		\State $r\gets r+1$
		\EndFor
	\end{algorithmic}
\end{algorithm}

Algorithm \ref{alg:lspa} will be reduced to \emph{Preferential Exploration Pairing Algorithm} if nodes in social group decide not to download new segments i.e. $a_i(r)=0\, \forall\, i\in\mathcal{M}, r>0$. However, nodes in each slot limit their preferences to top $\max\left(1,\lfloor e_i(r)\left|l_i(r)\right|\rfloor\right)$ nodes returning the maximum utility. We will obtain LPSM matching in each slot of Algorithm \ref{alg:pepa}.

\begin{algorithm}[h]
	\caption{Preferential Exploration Pairing Algorithm for node $i$}
	\label{alg:pepa}
	\begin{algorithmic}[1]
		\Require Knowledge of $O_j(r)\forall j\in\mathcal{M}$ to node $i$ and $e_i(r) \forall r>0$
		\For {all slots $r>0$}
		\State Arrange all nodes in $l_i(r)$ in decreasing order of their incremental utilities.
		\State Choose top $\max(1,\lfloor e_i(r)\left|l_i(r)\right|\rfloor)$ nodes from the list
		\While {(Preference list is non-empty) or (node $i$ does not have a partner)}
		\State Propose a link to node $j$ at head of preference list
		\If {node $j$ accepts $i$}
		\State Activate link between node $i$ and $j$.
		\EndIf
		\EndWhile
		\State $r\gets r+1$
		\EndFor
	\end{algorithmic}
\end{algorithm}

If nodes are open to pairing with any of the nodes in all slots, i.e. $e_i(r)=1\forall i\in\mathcal{M}, r>0$, then Algorithm \ref{alg:pepa} can be further reduced to \emph{Link for Sure} algorithm. In this case, MPSM matching will be obtained in each slot.
\begin{algorithm}[h]
	\caption{Link for Sure Algorithm for node $i$}
	\label{alg:lfs}
	\begin{algorithmic}[1]
		\Require Knowledge of $O_j(r)\forall j\in\mathcal{M}$ to node $i$
		\For {all slots $r>0$}
		\State Arrange all nodes in $l_i(r)$ in decreasing order of their incremental utilities.
		\While {(List is non-empty) or (node $i$ does not have a partner)}
		\State Propose a link to node $j$ at head of list
		\If {node $j$ accepts $i$}
		\State Activate link between node $i$ and $j$.
		\EndIf
		\EndWhile
		\State $r\gets r+1$
		\EndFor
	\end{algorithmic}
\end{algorithm}

\subsection{Randomized Strategy Selection}
\label{subsec:rand_sel}
For deterministic strategy selection, we assume nodes know about segment sets available with each of the nodes in each slot (Assumption A\ref{asmp:node_knows}), which requires overhead communication among nodes with the help of facilitator. This assumption might not hold true in general. For such scenarios, nodes would choose their partner nodes at random in each slot. We propose and analyze a randomized algorithm being followed by nodes for selection of nodes to activate links.

In the beginning of each slot, each node $i$, chooses a node among $\mathcal{M}\backslash\left\{i\right\}$. If the chosen node also chooses node $i$, then link is activated among them if the \gt is satisfied. This process is continued until no link exists in graph among the nodes $(\mathcal{L}(r)=\emptyset)$.

\begin{algorithm}[h]
	\caption{Decentralized Randomized Algorithm for node $i$}
	\label{alg:rand_algo}
	\begin{algorithmic}[1]
		\For {all decision epochs}
		\State Choose a node $j$ uniformly at random from $\mathcal{M}\backslash\left\{i\right\}$.
		\If {node $j$ chooses node $i$}
			\If {\gt is satisfied by nodes $i$ and $j$}
			\State Activate link between node $i$ and $j$.
			\EndIf
		\EndIf
		\EndFor
	\end{algorithmic}
\end{algorithm}

Algorithm \ref{alg:rand_algo} is based on  algorithm being used in Bittorrent based peer-to-peer networks \cite{bittorrent_spec,cheapfreeriding} and GT-compliant peer-to-peer networks \cite{agg_sadhana}.

Algorithm \ref{alg:rand_algo} is applicable irrespective of the number of the initial segments available with nodes however, for the sake of analysis, we assume node $i$ has chosen $k\in\left\{1,2,3\cdots n-1\right\}$ segments uniformly at random from the universe $\mathcal{N}$.

\begin{theorem}
	\label{thm:rand_limit}
	If initial segments sets ($O_i$'s) are chosen uniformly at random from $\mathcal{N}$, with $\left|O_i\right|=k, 1\leq k\leq n-1, \forall i\in\mathcal{M}$, then the randomized algorithm is asymptotically optimal\footnote{$(a\mod b)$ denotes the remainder when $a$ is divided by $b$, where $a,b\in\mathbb{N}$ and $a\geq b$.} in $m$, i.e. for large $m$, $$\frac{E\left(\sum_{i\in\mathcal{M}} \left|O_i(r_e)\right|\right)}{nm-\left(m\mod 2\right)}\rightarrow 1$$
\end{theorem}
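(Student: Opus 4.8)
The plan is to write the normalized sum as a coverage quantity and show that the expected ``deficit'' from the maximum $nm$ is $o(m)$. Since $nm-(m\bmod 2)=nm\,(1-o(1))$, the stated ratio tends to $1$ if and only if $E\big(\sum_{i}(n-|O_i(r_e)|)\big)=o(m)$ (here $n,k$ are fixed); the explicit parity term merely records the extremal intuition that in an ideal run nodes pair off and all reach $\mathcal N$ except possibly one leftover when $m$ is odd, and it is immaterial to the limit. Decomposing the deficit by segments,
\[
\sum_{i\in\mathcal M}\big(n-|O_i(r_e)|\big)=\sum_{s\in\mathcal N}\big|\{i:s\notin O_i(r_e)\}\big|,
\]
so it suffices to prove that for each fixed segment $s$ the expected number of nodes still missing $s$ at termination is $o(m)$.

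First I would pin down the terminal configuration. By the \gt a pair $(i,j)$ admits \emph{no} exchange exactly when one of $O_i,O_j$ contains the other; hence at $r_e$, when no pair satisfies the criterion, the sets $\{O_i(r_e)\}$ are totally ordered by inclusion (a chain). Two monotone facts drive everything: segment sets only grow, so a node that has acquired $s$ keeps it forever, and any node holding $\mathcal N$ is permanently \emph{frozen} (it can never meet the \gt again). A union bound gives $P\big(\bigcup_i O_i(1)\neq\mathcal N\big)\le n(1-k/n)^m\to 0$, and since the global union is invariant under exchanges, whp the top of the terminal chain equals $\mathcal N$.

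The heart of the argument is bounding the non-full population. Because the nodes are exchangeable (i.i.d.\ uniform initial sets, symmetric dynamics), it is equivalent to show $P(\text{a fixed }v\text{ is non-full at }r_e)\to 0$, so that $E[\#\text{non-full}]=m\,P(v\text{ non-full})=o(m)$; by a union bound over the $n$ segments this reduces to showing a fixed $v$ initially missing a fixed $s$ acquires $s$ with probability $\to 1$. Here I would use the structural observation that a node $i$ missing $s$ can fail to gain $s$ from a node $j$ holding $s$ \emph{only} when $O_i\subseteq O_j$ (since $j$ holds $s\notin O_i$, the criterion can fail only for lack of a segment $i$ can offer $j$); equivalently $i$ is blocked from $s$ only once $O_i\subseteq\bigcap_{j:\,s\in O_j}O_j$. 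As $\Theta(m)$ nodes carry $s$ from the outset (a concentrated Binomial$(m,k/n)$ count), as long as $\omega(1)$ nodes still miss $s$ the set of criterion-satisfying, mutually-selecting partners is nonempty with probability bounded below each slot, so the missing-$s$ population $\mu_s(r)=|\{i:s\notin O_i(r)\}|$ must keep shrinking. I would make this precise via a drift/supermartingale estimate on $\mu_s(r)$, or by coupling to the exactly solvable $n=2,\ k=1$ instance, where missing-$s$ and having-$s$ nodes ``annihilate'' on contact and the residual equals the imbalance between the two initial type-counts, $\Theta(\sqrt m)=o(m)$.

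The main obstacle is exactly this last step. Unlike the clean two-type annihilation, in general the \gt-compatibility between a missing-$s$ node and a having-$s$ node depends on the \emph{full, evolving} sets (through the blocking condition $O_i\subseteq O_j$), and missing-$s$ nodes may themselves merge and grow, correlating the dynamics across segments. Controlling this amounts to showing that these growing subsets cannot simultaneously drop below $\bigcap_{j:\,s\in O_j}O_j$ for more than $o(m)$ nodes before meeting a suitable partner --- i.e.\ that the terminal chain carries $o(m)$ mass strictly below $\mathcal N$. I expect the cleanest route is a per-segment drift argument yielding $E[\mu_s(r_e)]=O(\sqrt m)$, which summed over the $n$ fixed segments gives $E\big(\sum_i(n-|O_i(r_e)|)\big)=O(\sqrt m)=o(m)$, whence the ratio converges to $1$.
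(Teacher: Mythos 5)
Your reduction and structural setup are sound: since $n,k$ are fixed, $nm-(m\bmod 2)=nm\left(1-o(1)\right)$, so the claim is indeed equivalent to the expected aggregate deficit $E\left(\sum_{i}(n-\left|O_i(r_e)\right|)\right)$ being $o(m)$; the terminal sets are pairwise comparable under inclusion (hence a chain); full nodes are frozen; the global union equals $\mathcal{N}$ with high probability by the bound $n(1-k/n)^m$; and a node $i$ missing $s$ can be refused by a holder $j$ of $s$ only when $O_i\subseteq O_j$. The genuine gap is precisely the step you flag as the main obstacle: the estimate $E[\mu_s(r_e)]=o(m)$ is never established, only conjectured. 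Worse, the intermediate claim that ``as long as $\omega(1)$ nodes still miss $s$ \ldots the missing-$s$ population must keep shrinking'' is false as stated: the process can run on (missing-$s$ nodes exchanging among themselves, holders exchanging among themselves) while $\mu_s$ stays constant, and it can terminate with every missing-$s$ node nested below every holder of $s$. Nothing in the proposal excludes absorbing configurations in which $\Theta(m)$ nodes end strictly below $\mathcal{N}$, which is exactly what must be ruled out. The proposed coupling to the $n=2,k=1$ annihilation process does not carry over, because for general $(n,k)$ a missing-$s$/having-$s$ contact need not resolve (it is blocked exactly when nesting holds), the blocking set $\bigcap_{j:s\in O_j}O_j$ itself evolves, and exchanges internal to the missing-$s$ population correlate the per-segment processes. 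A further technical wrinkle: any per-slot drift bound must cope with the fact that a given unblocked pair mutually selects only with probability $(m-1)^{-2}$, so per-slot progress vanishes with $m$; the argument has to be about the absorbing state reached over an unbounded horizon, not about per-slot decrease.

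For comparison, the paper proves Theorem \ref{thm:rand_limit} by an entirely different, mean-field route: it computes the expected one-slot cardinality gain of a single tagged node under uniformly random segment sets, then \emph{assumes} the configuration re-uniformizes after each slot to obtain a scalar recursion for $E(x_{(r,i)})$; this sequence is monotone, bounded by $n$, and its only fixed point is $n$, so each node's expected cardinality is argued to approach $n$, after which normalizing by the bound of Corollary \ref{cor:agg_opt} gives the ratio. Note that the paper's argument is itself heuristic at the analogous point (the re-uniformization approximation, and the passage from convergence in $r$ to a statement asymptotic in $m$), so a completed version of your per-segment program would actually be a stronger and more honest proof than the one in the paper. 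As it stands, however, your proposal is a correct reduction plus correct structural lemmas plus a conjecture, and the conjecture is where all the difficulty lives.
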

\begin{proof}
	In each decision epoch $r$, various link activations might take place. A link between node $i$ and $j$ at decision epoch $r$ will be activated iff, 
	\begin{inparaenum}[\it (a)]
		\item node $i$ chooses node $j$;
		\item node $j$ choose node $i$; and
		\item \gt is satisfied between nodes $i$ and $j$ at decision epoch $r$.
	\end{inparaenum}
	
	For notational convenience, let $x_{(r,i)}=\left|O_i(r)\right|$ , where $O_i(r)$ denotes the segment set available with node $i$ in beginning of decision slot $r$. Also, let $s_{(r,i)}$ denote the node chosen\footnote{$s_{(r,i)}$ is different from $s_i(r)$ as defined in Section \ref{sec:sysmodel}.} in slot $r$ by node $i$. At start of first decision slot, each node has $x_{(1,i)}=k$ segments, that are uniformly randomly picked from the universe $\mathcal{N}$. Since algorithm is random, we choose node $i$ without the loss of any generality. Let node $i$ choose node $j_r$ in $r^{th}$ decision slot, i.e $s_{(r,i)}=j_r$, where $j_r$ is chosen uniformly at random from $\mathcal{M}\backslash\left\{i\right\}$.
	
	The segment set at node $i$ in beginning of slot $r$ is $O_i(r)=\left\{e_1,e_2\cdots e_{\left|O_i(r)\right|}\right\}$. For every decision slot $r$, and any segment $e\in O_i(r)$, we define random variables $X_e(j_r,r)$:
	$$X_e(j_r,r)=\begin{cases}
	1\quad if\, e\not\in O_i(r)\\
	0\quad otherwise
	\end{cases}$$
	For slot 1, node $i$'s segment set cardinality can increase after exchange, $\left(x_{(2,i)}-x_{(1,i)}\right)$, will equal the number of segments that are available with node $j_1$ but not with node $i$, iff $s_{(r,j_r)}=i$ i.e.
	$$\left(x_{(2,i)}-x_{(1,i)}\right)=\begin{cases}
	\sum_{e\in O_i(r)} X_e\quad \text{if } \begin{matrix}
	s_{(r,i)}=j_r,\,s_{(r,j_r)}=i,\\
	j_r\in S_i(r)
	\end{matrix}\\
	0\quad otherwise 
	\end{cases}$$
	Now,
	\begin{align}
	& E\left(x_{(2,i)}-x_{(1,i)}\right) \nonumber\\
	&= E\left(\sum_{e\in O_i(1)} X_e\left|\begin{matrix}
	s_{(1,i)}=j_1,\\
	s_{(1,j_1)}=i,\\
	j_1\in S_i(1)
	\end{matrix}\right.\right)\mathbb{P}\left(\begin{matrix}
	s_{(1,i)}=j_1,\\
	s_{(1,j_1)}=i,\\
	j_1\in S_i(1)
	\end{matrix}\right)\nonumber\\
	&\stackrel{(a)}{=}\left|O_{j_1}(1)\right|\mathbb{P}\left(X_e=1\right)\mathbb{P}(s_{(1,i)}=j_1)\mathbb{P}(s_{(1,j_1)}=i)\mathbb{P}(j_1\in S_i(1))\nonumber\\
	&=k\cdot\left(1-\dfrac{k}{n}\right)\cdot\dfrac{1}{m-1}\cdot\dfrac{1}{m-1}\cdot\left(1-\dfrac{1}{{n\choose k}}\right)\nonumber\\
	&=\dfrac{k}{\left(m-1\right)^2}\left(1-\dfrac{k}{n}\right)\left(1-\dfrac{1}{{n\choose k}}\right)
	\label{eq:rand_1}
	\end{align}
	where (a) follows from linearity of expectation and independence of events.
	
	Using $E(x_{(1,i)})=k$, rewriting \eqref{eq:rand_1} as,
	\begin{align}
	\label{eq:rand_2}
	E(x_{(2,i)})=E(x_{(1,i)})+\dfrac{E(x_{(1,i)})}{\left(m-1\right)^2}\left(1-\dfrac{E(x_{(1,i)})}{n}\right)\left(1-\dfrac{1}{{n\choose {E(x_{(1,i)})}}}\right)
	\end{align}
	To proceed further in analysis, we assume the segment sets obtained in the beginning of second slot to uniformly distributed with each node having a cardinality of $E(x_{(2,i)})$.
	
	Generalizing \eqref{eq:rand_2} for any $r$ and approximating ${n\choose k'}\approx \dfrac{\Gamma(n+1)}{\Gamma(k'+1)\Gamma(n-k'+1)}$ for non-integer $k'$, we get,
	\begin{align}
	\label{eq:rand_3}
	& E(x_{(r+1,i)})\approx E(x_{(r,i)}) +\nonumber\\ &\dfrac{E(x_{(r,i)})\Gamma\left(E(x_{(r,i)})+1\right)\Gamma\left(n-E(x_{(r,i)})+1\right)\left[n-E(x_{(r,i)})\right]}{n\Gamma(n+1)(m-1)^2}
	\end{align}
	Sequence $E(x_{(r,i)})$ is monotonically non-decreasing and bounded above by $n$; hence sequence converges to $n$ for large enough $m$. Hence, each node gets $n$ segments asymptotically. Thus, asymptotically in $m$, i.e. for large number of nodes aggregate cardinality approaches optimal aggregate cardinality (by use of Corollary \ref{cor:agg_opt}) \textit{i.e.} $nm-\left(m\mod 2\right)$. Therefore, for large $m$,
	$$\frac{E\left(\sum_{i\in\mathcal{M}} \left|O_i(r_e)\right|\right)}{nm-\left(m\mod 2\right)}\rightarrow 1$$
\end{proof}
	\section{Price of Choices}
\label{sec:poc}
Considering social group with non-aggressive nodes (i.e. $a_i(r)=0\forall i\in\mathcal{M}, r>0$), each node has many choices and can be matched in numerous ways at each slot. Since, the utility being earned by a node also depends on choice made by other node. Each choice of the a node in a slot can lead to different matching, hence can lead to different utilities for the node towards the end. Only few nodes will be able to get the desired utilities towards the end of all exchanges. However, there exists a series of choices such that aggregate utility of all nodes can be maximized towards the end of all exchanges. Owing to their selfishness to maximize their individual utilities, each nodes makes choices which might be optimal for it, but not for the social group.

Such situations are observed in various other scenarios as well, where owing to selfish behavior of a agents/players system tends to deviate from the optimal behavior. The Price of Anarchy (PoA) \cite{poa} is one such game theoretic concept to measure degradation in the efficiency of a system due to selfish behavior of agents involved. It is a general notion that has been extended to diverse systems and notions of efficiency \cite{routing_poa,queue_poa,social_poa,game_poa}. The Price of Anarchy is defined as the ratio between the optimal `centralized' solution and the `worst equilibrium'. Different notions of equilibrium lead to variations in notion of Price of Anarchy as Pure PoA \cite{pure_poa,pure_poa_leme}, Strong PoA \cite{spoa}, Bayes-Nash PoA \cite{pure_poa_leme}, Price of Stability \cite{price_stability} etc.

We define \emph{Price of Choices (PoC)\footnote{PoC is only defined for cases where nodes are non-aggressive, i.e. $a_i(r)=0\forall i\in\mathcal{M}, r>0$}} as the ratio between the optimal 'centralized' solution and the 'utilitarian function' at the end.
$$PoC (\hat{S}(r_{end}))=\frac{\alpha^*}{U(r_{end})}$$
where $\alpha^*$ denotes the optimal `centralized' solution and $U(r_{end})$ denotes the aggregate utility of all nodes at the end of all exchanges if choices specified by $\hat{S}(r_{end})$ are being made.

For further analysis, we assume $u_i(r)=\left|O_i(r)\right|$. Hence, $U(r_{end})$ denotes the aggregate cardinality of all nodes at the end of all exchanges. Correspondingly, $\alpha^*$ denotes the optimal aggregate cardinality that can be obtained by the GT compliant exchanges in the social group. The problem of computing $\alpha^*$ has been addressed in \cite{social_optimum}.
	\section{Performance Evaluation}
\label{sec:performance}
\subsection{Limited Stable Pairing Algorithm(LSPA) Evaluation}
We evaluate the performance of the Algorithm \ref{alg:lspa} namely, \emph{Limited Stable Pairing Algorithm (LSPA)} for different social group scenarios. Each scenario consists of \emph{Segment Aggressive Probability (SAP)} chosen from a discrete set, \emph{Preferential Exploration Factor (PEF)} chosen from a discrete set, and a 3-tuple (number of nodes $m$, universe size $n$, initial segment size $k$). For sake of simplicity, in each scenario SAP and PEF are same across all nodes and slots. For each scenario we perform Monte Carlo simulations where the initial segments sets are chosen uniformly at random. 

We consider the two metrics of interest namely, 
\begin{itemize}
	\item \emph{Normalized Mean Aggregate Cardinality (NMAC):} It is defined as the ratio of Mean Aggregate cardinality to $mn$. NMAC characterizes the fraction of segments which have been obtained by all nodes either by using the inter-node links or downloading from the server for a given value of SAP and PEF. Mathematically, NMAC is the mean of $\frac{\sum_{i\in\mathcal{M}}\left|O_i\left(r_{end}\right)\right|}{mn}$ over all trials.
	\item \emph{Normalized Mean number of Segments Downloaded (NMSD):}It is defined as the ratio Mean number of segments downloaded from server to $mn$. NMSD characterizes the fraction of segments that has been obtained from server. Mathematically, NMSD is the mean of $\frac{\sum_{i\in\mathcal{M}}c_i \left(r_{end}\right)}{mn}$
\end{itemize}
\begin{figure*}
	\centering
	\subfigure[Normalized Mean Aggregate Cardinality]{
		\label{fig:agg_20_50_6}
		\includegraphics[width=0.45\textwidth]{./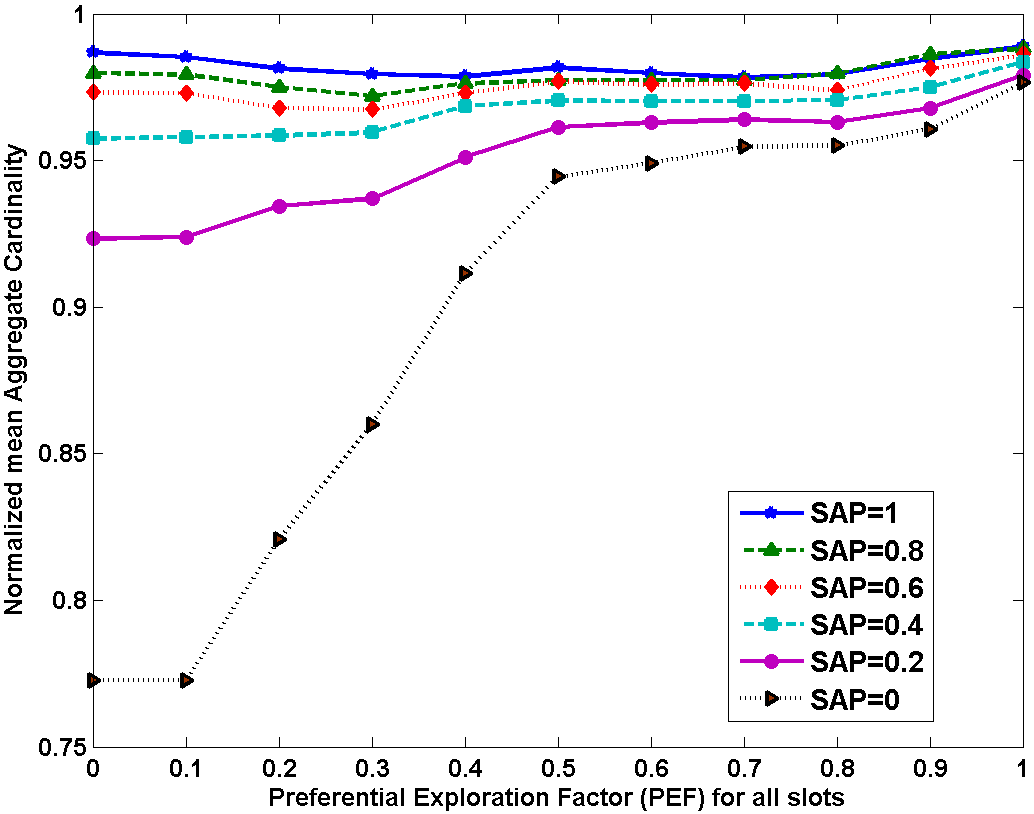}}
	\subfigure[Normalized Mean number of Segments Downloaded]{
		\label{fig:cost_20_50_6}
		\includegraphics[width=0.45\textwidth]{./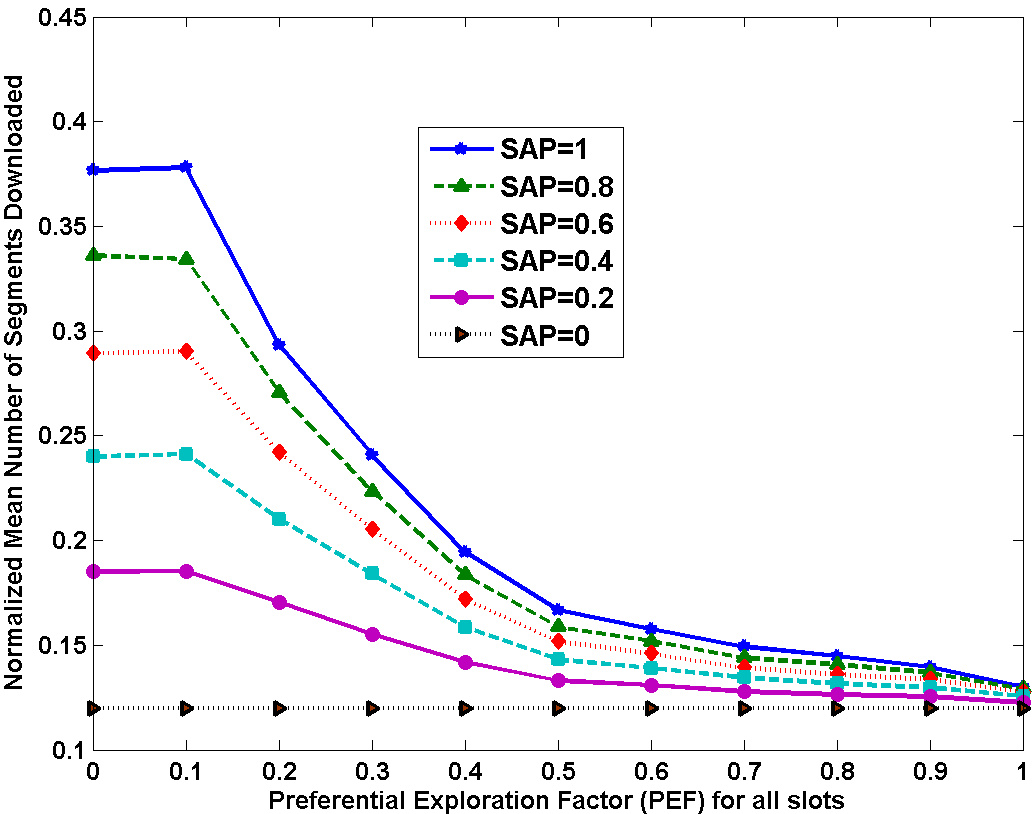}}
	\caption{Variation of Normalized Mean Aggregate Cardinality and Normalized Mean number of Segments Downloaded with PEF for different values of SAP for $\left(m,n,k\right)\equiv\left(20,50,6\right)$}
	\label{fig:lspa_20_50_6}
\end{figure*}
\begin{figure*}
	\centering
	\subfigure[Normalized Mean Aggregate Cardinality]{
		\label{fig:agg_30_60_5}
		\includegraphics[width=0.45\textwidth]{./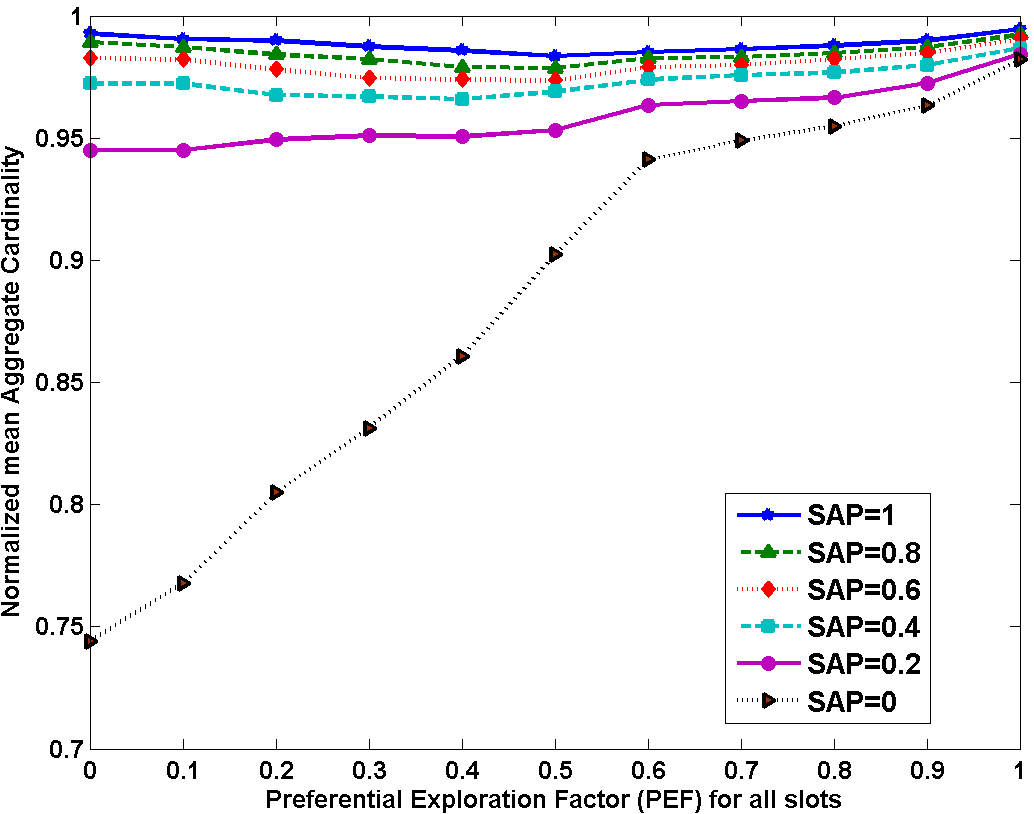}}
	\subfigure[Normalized Mean number of Segments Downloaded]{
		\label{fig:cost_30_60_5}
		\includegraphics[width=0.45\textwidth]{./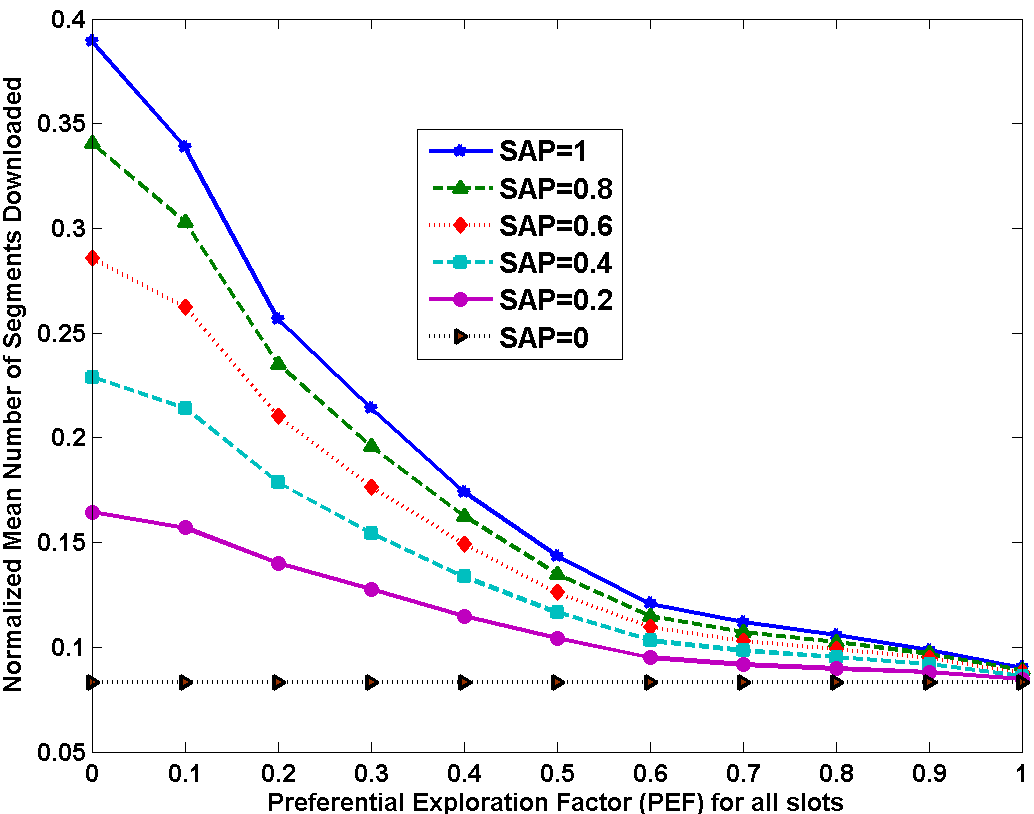}}
	\caption{Variation of Normalized Mean Aggregate Cardinality and Normalized Mean number of Segments Downloaded with PEF for different values of SAP for $\left(m,n,k\right)\equiv\left(30,60,5\right)$}
	\label{fig:lspa_30_60_5}
\end{figure*}
Figures \ref{fig:lspa_20_50_6} and \ref{fig:lspa_30_60_5} shows the variation of NMAC and NMSD with PEF for different values of SAP for social groups consisting of varied sets of $(m,n,k)$ For each set of $(m,n,k)$, $500$ sample points were used to compute NMAC and NMSD. For a given SAP, with the increasing PEF fraction of universe obtained by node increases. but number of segments downloaded from the server decreases. However, for a given PEF with the increasing SAP, fraction of universe obtained by nodes also increases at the expense of increased number of segments being downloaded from the server. Therefore, low value of SAP (i.e. lower aggression for new segments) and high PEF (i.e. exploring large number of possibilities for link formation) can yield high fraction of universe being available to nodes (i.e. high utility) towards the end of system. Ideally, all nodes in Social Groups should have $SAP=0$ and $PEF=1$.
\subsection{Price of Choices}
\begin{figure*}[t]
	\centering
	\includegraphics[width=0.9\textwidth]{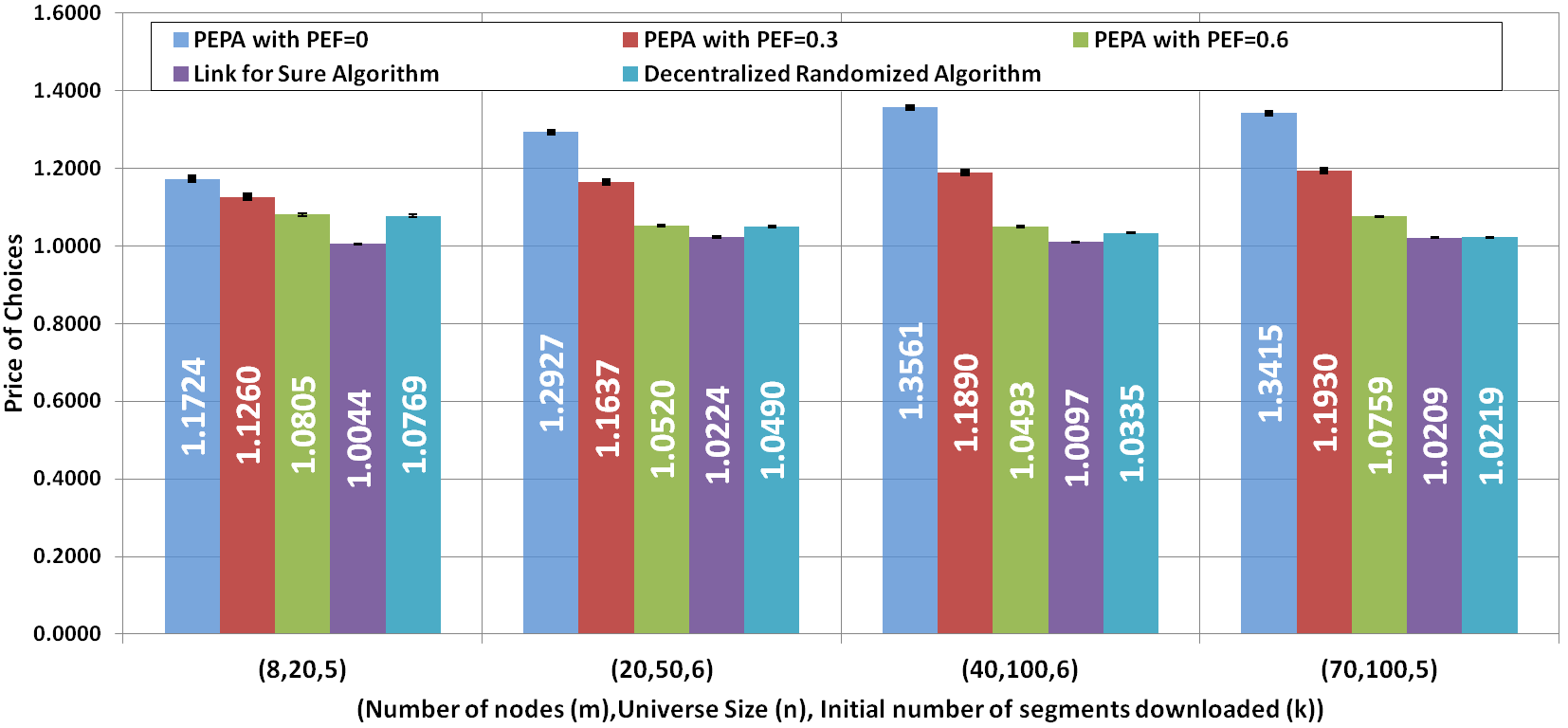}
	\caption{Expected value of Price of choices achieved by PEPA for different values of PEF, Link for Sure Algorithm and Decentralized Randomized Algorithm, averaged over 1000 randomly generated segment sets for 4 different problem instances (m,n,k).}
	\label{fig:multinode_results}
\end{figure*}
Figure \ref{fig:multinode_results} evaluates the performance of various algorithms (Algorithm \ref{alg:pepa},\ref{alg:lfs} and \ref{alg:rand_algo}) namely, \emph{Preferential Exploration Pairing Algorithm (PEPA)} for various values of PEF, \emph{Link for Sure Algorithm (LSA)} and \emph{Randomize Algorithm}. For a particular set of (number of nodes $m$, universe size $n$, initial segment size $k$) values, we generate $1000$ segment sets uniformly at random. For each such `sample point' or 'run' the optimal aggregate cardinality is computed using the algorithms/methos described in \cite{social_optimum} and each of the algorithms simulated. For each sample run, PoC is computed for each of the algorithm. Mean PoC (over $1000$ sample points) are shown, along with $95\%$ confidence interval.

Our results show that Link for Sure Algorithm (LSA) performs the best-- it is able to achieve the lowest Price of Choices (PoC) close to 1. The value of PoC increases, as PEF decreases for PEPA. PoC for Randomized Algorithm also closely follows PoC for LSA. In spite of differences among LSA and Randomized algorithm, one can observe that values of PoC obtained for the simulations is within $10\%$ of the desired value of PoC, i.e. 1.

Some key observations:
\begin{itemize}
	\item PoC for Algorithm \ref{alg:lfs} LSA is very close to 1.
	\item PoC for Algorithm \ref{alg:pepa} decreases with the increasing value of Preferential Exploration Factor (PEF).
	\item \emph{PoC for Randomized algorithm decreases as the number of nodes increases:} This follows from Theorem \ref{thm:rand_limit}. Randomized algorithm performs close to optimal as number of nodes increases, which leads to decrease in the value of PoC as verified using simulations.
\end{itemize}

Our results indicate that following decentralized algorithm \emph{Link for Sure} Algorithm will bring it quite close to optimal scenarios. Also, the linear complexity of $O(m)$ for the this algorithm\footnote{Complexity for computing \emph{Maximal Partial Stable Matching (MPSM)} in each slot is same as that of computing Stable roommate matching \textit{i.e.} $O(m^2)$} makes them quite useful for practical applications as well. However, social group needs to bear cost for overhead for sending information about segment sets to all nodes (Assumption A\ref{asmp:node_knows}).

In case, social group does not want to bear the cost of overhead for sending information, nodes can use the Randomized algorithm at the expense of inefficiency as compared to Link for Sure Algorithm.
	\section{Conclusions}
\label{sec:conclusions}
The problem studied in this paper is motivated in context of a common network architecture namely Social Groups (being observed in various socio-technological networks namely, Cellular Network assisted Device-to-Device, Cloud assisted Peer-to-Peer Networks, hybrid Peer-to-Peer Content Distribution Network and Direct Connect Networks), where each member is interested in maximizing its own utility (which is an increasing function of cardinality of node). These decentralized mutual exchanges will not only get nodes segments at low cost, but also decrease the cost of operation for the central server. However, to tackle the problem of non-reciprocating behavior arising in such situations, we use the GT criterion \cite{social_optimum}, and explore a number of algorithms for exchange of segments, with each exchange to be GT-compliant. 

Nodes based on their aggressiveness for new segments and preferential exploration nature can choose its strategies using \emph{Limited Stable Pairing Algorithm}. If node is not aggressive for new segments, but still restricted about its preferences for exchanges, then \emph{Preferential Exploration Pairing Algorithm} can be used for choosing the strategy in each slot. \emph{Link for Sure Algorithm} can be used by nodes for deciding their strategies, which do not restrict their preferences. Following (practically implementable linear complexity) LFS algorithm for deciding the myopic strategy in each slot, we observe that PoC is within $3\%$ of the desired PoC of 1 at no additional cost to nodes. This also reduces the need to consider the impact of choosing a strategy in a slot on future strategies as well which might come at a high computational cost to node.

However, all algorithms require information about the availability of segment sets in each slot, which might be unavailable, in that case nodes can use randomized algorithm. 

To summarize, \emph{low aggressiveness for new segments and exploration of maximum possibilities for link formation is in interest of the node as well as Social Group}.
\section{Future Work}
\label{sec:future}
Some of the algorithms in this paper require information about availability of segment sets in each slot to be made available to each node. The same can be practically implemented with the help of facilitator (which plays role similar to tracker in P2P networks). Inherently, it was assumed that nodes provide the information on segment availability truthfully, which might not be true. There might be scenarios where falsifying information can give undue benefits to some node(s).

Also, in this work, we have assumed that all nodes have homogeneous utility function, which is a strictly increasing function of cardinality of node. However, node(s) may have utility functions which do not fall within the scope. Hence, we will like to study the node's evolution when social group consists of nodes with heterogeneous and generic utility functions. With heterogeneous utility functions, nodes might choose to follow different algorithms. We will like to analyze scenarios where nodes have heterogeneous and generic utility functions along with following different decentralized algorithms. Further, we will like to consider systems with different exchange criterion.
	\bibliography{info_exchange}
	\appendices
\section{Stable Roommates Problem and Variants}
The stable-roommate problem (SRP) is the problem of finding a stable matching among a pair of elements, such that, there is no pair of elements, each from a different matched set, where each member of the pair prefers the other to their match\cite{srp1985}. In a given instance of the stable-roommates problem (SRP), each of $m$ participants ranks the other participants in order of preference. A matching is a set of $\frac{m}{2}$ disjoint pairs of participants. A matching $M$ in an instance of SRP is stable if there are no two participants $x$ and $y$, each of whom prefers the other to his partner in $M$. Such a pair is termed as stable pair. 

Multiple variants of the SRP have been proposed and studied by researchers. We consider one such variant namely, Stable Roommates Problem with Ties and Incomplete lists (SRPTI), in which, the participants are allowed to have ties among participants and can chose to ignore certain participants \cite{srp2002}. Our problem is each slot can be mapped to an instance of SRPTI. Stable Matching as defined in \cite{srp1985,srp2002} exists iff all participants are able to find a pairing participant. But, there might exist some matchings in which only some participants are able to find stable pairing participants. We term such matchings as \emph{Partial Stable Matching}.

We have defined \emph{Limited Preference Stable Matching} and \emph{Maximal Partial Stable Matching} in context of our problem.


\section{}
\begin{lemma}[Rewriting Lemma 1 of \cite{social_optimum}]
	For any order of link activations (resulting in a completely disconnected graph) and $O_i\subsetneq \bigcup_{j\in\mathcal{M}} O_j \forall i\in\mathcal{M}$, at least two nodes will have $\bigcup_{i\in\mathcal{M}} O_i$.
\end{lemma}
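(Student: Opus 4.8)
The plan is to exploit two structural invariants of \gt-compliant exchanges and then run a ``first-appearance'' argument. Write $\mathcal{U}:=\bigcup_{j\in\mathcal{M}}O_j$ for the common union of all segment sets. First I would record the two facts that drive everything. \emph{(i) The total union is invariant}: a single \gt exchange between $i$ and $j$ replaces $O_i,O_j$ by $O_i\cup O_j$ and leaves every other set untouched, so it neither creates nor destroys a segment type; hence $\bigcup_{j}O_j(r)=\mathcal{U}$ at every stage $r$. \emph{(ii) Sets are monotone non-decreasing}: each node's set can only grow, since an exchange replaces it by a superset. A useful consequence of (ii) is that once a node attains $\mathcal{U}$ it keeps $\mathcal{U}$ forever and, moreover, becomes \emph{frozen}: if $O_v=\mathcal{U}$ then for any $w$ we have $O_w\subseteq\mathcal{U}=O_v$, so $O_w\cap O_v^c=\emptyset$ and the \gt fails for the pair $(v,w)$.

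Next I would analyse the terminal configuration. ``Completely disconnected'' means no pair satisfies the \gt, i.e. for every pair $(i,j)$ the negation of \gt holds, which forces $O_i\subseteq O_j$ or $O_j\subseteq O_i$. Thus the terminal sets are pairwise comparable and form a chain under inclusion. A finite chain has a maximum element $M$ containing all the others, so $\bigcup_i O_i(r_{end})=M$; combined with invariant (i) this gives $M=\mathcal{U}$. Hence at least one node holds $\mathcal{U}$ at the end.

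Finally, to promote ``at least one'' to ``at least two'', I would locate the \emph{first} stage at which any node attains $\mathcal{U}$. Since no node starts with $\mathcal{U}$ (the hypothesis $O_i\subsetneq\mathcal{U}$ for all $i$) and some node ends with $\mathcal{U}$, such a first stage exists, and the attainment can only happen through an exchange: the node $v$ reaching $\mathcal{U}$ did so by exchanging with some partner $w$, after which both hold $O_v\cup O_w=\mathcal{U}$. So $\mathcal{U}$ is acquired by the two (distinct) partners \emph{simultaneously}, and by monotonicity both retain it through $r_{end}$. Therefore at least two nodes end with $\mathcal{U}$.

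The bookkeeping (invariance and monotonicity) is routine; the one place to argue carefully is the terminal step --- that ``completely disconnected'' yields pairwise comparability and hence a chain whose maximum must equal $\mathcal{U}$ --- since this is what guarantees $\mathcal{U}$ is actually attained and lets the first-appearance/partner argument double the count. The main obstacle is resisting a tempting but flawed shortcut that looks only at the chronologically last activated link: its two endpoints need not hold $\mathcal{U}$, because the maximal set could already reside at an uninvolved node, so the argument must be anchored at the \emph{first} appearance of $\mathcal{U}$ rather than the last exchange.
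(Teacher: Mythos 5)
Your proof is correct. Note that the paper itself contains no argument for this lemma --- its ``proof'' is a one-line deferral to Lemma~1 of \cite{social_optimum} --- so there is no in-document proof to compare against; your write-up is a genuinely self-contained replacement. The three ingredients you use are exactly what is needed and each step checks out: (i) union-invariance and monotonicity of the sets under GT exchanges; (ii) the observation that termination (no pair satisfies the GT criterion) forces pairwise comparability, hence a finite chain under inclusion whose maximum must equal $\bigcup_{j\in\mathcal{M}}O_j$ by invariance, so the universe is attained by at least one node; and (iii) the first-attainment argument, which uses the hypothesis $O_i\subsetneq\bigcup_{j}O_j$ for all $i$ to conclude that the first node to reach the union can only have done so through an exchange, whereupon its partner reaches it simultaneously and both are frozen thereafter. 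Your closing remark is also well taken: anchoring at the \emph{last} activated link is a genuine trap, since the union may already reside at a node not involved in that link, and your first-appearance formulation is the correct way around it. The only stylistic suggestion is to state explicitly that even if several links activate in one slot, each node participates in at most one of them, so ``the exchange by which $v$ first attains the union'' is unambiguous; this is implicit in the pairing model but worth a sentence.
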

\begin{proof}
	Please refer to proof for Lemma 1 of \cite{social_optimum}.
\end{proof}
\begin{corollary}
	\label{cor:agg_opt}
	For any order of link activations (resulting in a completely disconnected graph) optimal aggregate cardinality is upper bounded by $nm-\left(m\mod 2\right)$.
\end{corollary}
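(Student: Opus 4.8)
The plan is to split on the parity of $m$ and to identify the term $(m \bmod 2)$ with exactly one node that can never be filled when $m$ is odd. The trivial bound $\sum_{i\in\mathcal{M}}|O_i(r_e)| \le nm$ holds because every segment set is a subset of $\mathcal{N}$, and this already settles the even case, where $m \bmod 2 = 0$ and the asserted bound is just $nm$. All the content lies in the odd case, where I must show that at least one node terminates with $|O_i(r_e)| \le n-1$, thereby shaving off a single unit.

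First I would establish a parity invariant: at every stage of any admissible sequence of GT-compliant exchanges, the number of nodes whose segment set equals the full universe $\mathcal{N}$ is even. The base case holds by assumption A\ref{asmp:strict_inc}'s companion A2, since initially $O_i \neq \mathcal{N}$ for all $i$, so the count is $0$. For the inductive step, consider an exchange on link $(i,j)$, after which both endpoints hold $O_i(r)\cup O_j(r)$. The crucial sub-step is that a node already holding $\mathcal{N}$ can never participate in an exchange: if $O_i(r)=\mathcal{N}$ then $O_i(r)^c \cap O_j(r) = \emptyset$, so the \gt fails for the pair $(i,j)$. Hence every activated link joins two non-full nodes, and its effect on the count is either $0$ (when the union is still a proper subset of $\mathcal{N}$) or $+2$ (when the union equals $\mathcal{N}$, filling both partners simultaneously). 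In either case the count of full-universe nodes remains even.

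With the invariant in hand, the conclusion follows quickly. At termination the number of full-universe nodes is even, so when $m$ is odd it is at most $m-1$; consequently at least one node ends with a proper subset of $\mathcal{N}$, contributing at most $n-1$ to the aggregate, while the remaining $m-1$ nodes contribute at most $n$ each. Summing gives $\sum_{i\in\mathcal{M}}|O_i(r_e)| \le (m-1)n + (n-1) = nm - 1 = nm - (m \bmod 2)$, and since this holds for every activation order it holds in particular for the maximizing one, yielding the stated bound on the optimal aggregate cardinality. Combined with the even case, both are captured by the single expression $nm - (m \bmod 2)$.

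The main obstacle I anticipate is justifying the parity invariant cleanly rather than the final arithmetic, which is routine. In particular I must be careful that the invariant holds for arbitrary activation orders --- not merely those the proposed algorithms produce --- and that the ``full nodes never exchange'' step is phrased precisely through the \gt, since it is exactly this fact, that a completed node drops out of the exchange graph, which forces full-universe nodes to be created strictly two at a time and hence forces an even count. This pairing phenomenon is the same one underlying the ``at least two nodes'' conclusion of the preceding lemma rewritten from \cite{social_optimum}; here it is the \emph{parity} of that count, rather than merely the existence of a single pair, that carries the argument.
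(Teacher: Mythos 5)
Your proposal is correct and takes essentially the same route as the paper: the trivial bound $nm$ handles even $m$, and for odd $m$ the parity of the set of full-universe nodes forces at least one node to end with at most $n-1$ segments, giving $nm-\left(m\bmod 2\right)$. The only difference is one of rigor --- the paper merely asserts that ``the number of nodes with universe can be even only,'' whereas you actually prove this via the invariant that a full node fails the \gt and so full nodes are created exactly two at a time, which is a gap the paper leaves implicit.
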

\begin{proof}
	Trivial upper bound on aggregate cardinality is given by $nm$ as system consists of $m$ nodes and $n$ segments. For odd number of nodes, it follows from the fact that number of nodes with universe can be even only. Hence, considering the best possible scenario where $m-1$ nodes have got the universe. And node without universe will have at least 1 segment missing, therefore, optimal aggregate cardinality is upper bounded by $nm-\left(m\mod 2\right)$.
\end{proof}
\end{document}